\newcommand{\F}{\mathbb{F}}
\renewcommand{\P}{\mathbb{P}}
\newcommand{\Fq}{\mathbb{F}_q}
\newcommand{\ab}{\mathbf a}
\newcommand{\bb}{\mathbf b}
\newcommand{\X}{\mathcal X}
\newcommand{\Gm}{\mathbf{G}}
\newcommand{\C}[3]{\mathcal C_L(\mathcal{#1}, {#2},{#3})}
\newcommand{\GRS}[3]{\mathrm{GRS}_{#1}({#2},{#3})}
\renewcommand{\leq}{\leqslant}
\renewcommand{\geq}{\geqslant}
\begin{document}

\title*{Cryptanalysis of public-key cryptosystems that use subcodes of algebraic geometry codes}
\titlerunning{Cryptanalysis of PKC that use subcodes of AG codes}
\author{Alain Couvreur, Irene M\'arquez-Corbella and Ruud Pellikaan}
\institute{Alain Courvreur \at INRIA, SACLAY \& LIX, CNRS UMR 7161, \'Ecole Polytechnique 91128 Palaiseau Cedex, \email{alain.couvreur@lix.polytechnique.fr}
\and Irene M\'arquez-Corbella \at INRIA, SACLAY \& LIX, CNRS UMR 7161, \'Ecole Polytechnique 91128 Palaiseau Cedex,
\email{irene.marquez-corbella@inria.fr}
\and Ruud Pellikaan \at Departement of Mathematics and Computing Science, Eindhoven University of Technology, P.O. Box 513, 5600 MB Eindhoven
\email{g.r.pellikaan@tue.nl}}
%
%
\maketitle

\abstract{
  We give a polynomial time attack on the McEliece public key cryptosystem based on subcodes of algebraic geometry (AG) codes.
The proposed attack reposes on the distinguishability of such codes from
random codes using the Schur product.
Wieschebrink treated the genus zero case a few years ago but his approach
cannot be extent straightforwardly to other genera.
We address this problem by introducing and using a new notion, which we call the
{\em$t$--closure} of a code.
}


\keywords{Algebraic geometry codes, code-based cryptography,
Schur products of codes, distinguishers.}


\section{Introduction}
\label{Section::1}

After the original proposal of code based encryption scheme due to McEliece
\cite{mceliece:1978} which was based on binary Goppa codes, several
alternative proposals aimed at reducing the key size by using codes with
a higher correction capacity. Among many others,
generalised Reed--Solomon (GRS) codes are proposed
in 1986 by Niederreiter
\cite{niederreiter:1986} but are subject to a key-recovery
polynomial time attack
discovered by Sidelnikov and Shestakov
\cite{sidelnikov:1992} in 1992.
To avoid this attack, Berger and Loidreau \cite{berger:2005}
proposed to replace
GRS codes by some random subcodes of small codimension.
This proposal has been broken by Wieschebrink \cite{wieschebrink:2010}
using Schur products of codes.

Another proposal was to use algebraic geometry (AG) codes, concatenated AG codes or their subfield subcodes \cite{janwa:1996}.
The case of AG codes of genus $1$ and $2$ has been broken by Faure and Minder
\cite{faure:2008}. Then, Marquez et. al. proved that the structure of a curve
can be recovered from the very knowledge of an AG code \cite{marquez:2014a, marquez:2014b} without leading to an efficient attack. Finally a polynomial
time attack of the scheme based on AG codes has been obtained by the authors
in \cite{ISIT14}. This attack consists in using the particular behaviour of
AG codes with respect to the Schur product to compute a filtration
of the public key by AG subcodes, which leads to the design of a polynomial
time decoding algorithm allowing encrypted message recovery.

The genus zero case and Berger Loidreau's proposal
raises a natural question
\textbf{what about using subcodes of AG codes?}
In this article we propose an attack of this scheme.
Compared to the genus zero case, Wieschebrink's attack cannot extend
straightforwardly and we need to introduce and use a new notion which we call
the {\em $t$--closure} of a code.
By this manner, we prove
subcodes of AG codes to be non secure when the subcode has
a small codimension.
It is worth noting that choosing a subcode of high codimension
instead of the code itself represents
a huge loss in terms of error correction capacity and hence
 is in general
a bad choice. For this reason, an attack on the small codimension codes
is of interest.

Finally, it hardly needs to be recalled that this result does not imply
the end of code-based cryptography since Goppa codes, alternant
codes and more generally subfield subcodes of AG codes still
resist to any known efficient attack.
Their resistance to the presented attack is discussed at the end of the article.

Due to space reasons, many proofs are omitted in this extended abstract.





\section{Notation and prerequisites}
\label{Section::nota}

\subsection{Curves and algebraic geometry codes}
The interested reader is referred to \cite{stichtenoth:2009, TVN} for further details on the notions introduced in the present subsection.
In this article, $\X$ denotes a smooth projective geometrically connected
curve  of genus $g$ over a finite field $\Fq$. We denote by
$P=(P_1, \ldots, P_n)$ an $n$-tuple of mutually distinct $\Fq$-rational
points of $\X$, by $D_{P}$ the divisor $D_{P}=P_1 + \cdots + P_n$ and by
$E$ an $\Fq$-divisor of degree $m\in \mathbb Z$ and support disjoint
from that of $D_{P}$.

 The function field of $\X$ is denoted by $\Fq(\X)$. Given an $\mathbb F_q$-divisor $E$ on $\X$, the corresponding Riemann-Roch space is denoted by $L(E)$.
The {\em algebraic geometry (AG) code} $\C{X}{P}{E}$ of length $n$ over $\mathbb F_q$ is the image of 
the evaluation map
$$
\mathrm{ev}_{P} : \left\{
  \begin{array}{ccc}
    L(E) & \longrightarrow & \Fq^n \\
    f    & \longmapsto     & (f(P_1), \ldots , f(P_n))
  \end{array}
\right.
$$
If $2g-2<m<n$, then by Riemann-Roch Theorem, $\C{X}{P}{E}$ has dimension $m+1-g$ and minimum distance at least $n-m$.

When the curve is the projective line $\P^1$, the corresponding codes are
the so-called {\em generalised Reed--Solomon} (GRS) codes defined as:
$$
\mathrm{GRS}_k(\ab, \bb) := \{(b_1 f(a_1), \ldots , b_n f(a_n))
~|~ f\in \Fq[x]_{<k}\}.
$$
where $\ab, \bb$ are two $n$--tuples in $\Fq^n$
such that the entries of $\ab$ are pairwise distinct and those of $\bb$ are all nonzero and $k<n$.

\begin{remark}
See \cite[Example 3.3]{hohpel} for a description
of GRS codes as AG codes.
\end{remark}

\subsection{Schur product}

Given two elements $\ab$ and $\bb$ in $\Fq^n$, the \emph{Schur product}
is the component wise multiplication: $\ab * \bb =
(a_1b_1, \ldots, a_nb_n)$
.
Let $\ab \in \Fq^n$, we set $\ab^0 :=(1,\ldots , 1)$
and by induction we define $\ab^{j+1}:=\ab * \ab^j$ for any
positive integer $j$. If all entries of $\bb$ are nonzero,
we define $\bb^{-1}:=(b_1^{-1}, \ldots , b_n^{-1})$ and thus, $\bb^{-j} = \left( \bb^j\right)^{-1}$ for any positive integer $j$.

For two codes $A, B \subseteq \Fq^n$, the code $A*B$ is defined by
$$A*B := \mathrm{Span}_{\Fq} \left\{ \ab * \bb  \mid \ab \in A \hbox{ and } \bb \in
B\right\}.$$
For $B=A$, then $A*A$ is denoted as $A^{(2)}$ and, we define $A^{(t)}$
by induction for any positive integer $t$.


\subsubsection{Application to Decoding, error correcting pairs and arrays}
The notion of \emph{error-correcting pair} (ECP) for a linear code
was introduced by Pellikaan \cite{pellikaan:1988,pellikaan:1992}
and independently by K\"otter \cite{koetter:1992}.
Broadly speaking, given a positive integer $t$, a $t$--ECP
for a linear code $\mathcal C \subseteq \mathbb F_q^n$
is a pair of linear codes $(A,B)$ in $\mathbb F_q^n$ satisfying
$A*B\subseteq \mathcal C^{\perp}$ together with several
inequalities relating $t$ and the dimensions and (dual) minimum distances
of $A, B, C$.
This data provides a decoding algorithm correcting up to $t$
errors in $O(n^3)$ operations in $\Fq$.
ECP's provide a unifying point of view for several
classical bounded distance decoding for algebraic and AG codes.
See \cite{marquez:2012b} for further details.

For an AG code, there always exists a $t$--ECP with
$t = \lfloor \frac{d^* - 1 - g}{2}\rfloor$, where $d^*$
denotes the {\em Goppa designed distance}
 (see \cite[Definition 2.2.4]{stichtenoth:2009}).
Thus, ECP's allow to correct up to half the designed distance minus
$g/2$. Filling this gap and correct up to half the designed distance
is possible thanks to more elaborate algorithms based on the so-called
{\em error correcting arrays}. See \cite{duursma:93,feng:93}
for further details.



\subsubsection{Distinguisher and Cryptanalysis}
Another and more recent application of the Schur product
concerns cryptanalysis of
code-based public key cryptosystems. In this context, the Schur product
is a very powerful operation which can help to distinguish some algebraic
codes such as AG codes from random ones.
The point is that evaluation codes do not behave
like random codes with respect to the Schur product: the square of an AG
code is very small compared to that of a random code of the same dimension. Thanks to this observation, Wieschebrink \cite{wieschebrink:2010}
gave an efficient attack of
Berger Loidreau's proposal \cite{berger:2005} based on subcodes
of GRS codes.

Recent attacks consist in pushing this argument forward and take advantage
to this distinguisher in order to compute a filtration of the public
code by a family of very particular subcodes.
This filtration method yields an alternative attack
on GRS codes \cite{CGGOT12}. Next it leads to a
key recovery attack on wild Goppa codes over quadratic extensions
in \cite{COT14}.
Finally in the case of AG codes, this approach lead to an attack
\cite{ISIT14} which consists in the computation of an ECP for the public code
without retrieving the structure of the curve, the points and the
divisor.

\section{The attack}
\label{Section::4}
Our public key is a non structured generator matrix $\Gm$ of a subcode $C$
of $\C{X}{P}{E}^{\bot}$ of dimension $l$, together with the error correcting capacity $t$. The goal of our attack is to recover the code
$\C{X}{P}{E}^{\bot}$ from the knowledge of $C$ and then use the attack
of \cite{ISIT14} which provides a $t$--ECP and hence
a decoding algorithm for $\C{X}{P}{E}$, which yields a fortiori
a decoding algorithm for $C$.


The genus zero case (i.e. the case of GRS codes) proposed in
\cite{berger:2005}
was broken by Wieschebrink \cite{wieschebrink:2010}
as follows:
\begin{itemize}
\item $C$ is the public key contained in some secret
$\GRS{k}{\ab}{\bb}$.
\item Compute $C^{(2)}$ which is, with a high probability,
equal to $\GRS{k}{\ab}{\bb}^{(2)}$,
which is itself equal to $\GRS{2k-1}{\ab}{\bb^2}$.
\item Apply Sidelnikov Shestakov attack \cite{sidelnikov:1992}
 to recover
$\ab$ and $\bb^2$, then find $\bb$.
\end{itemize}
Compared to Wieschebrink's approach, our difficulty is that
the attack \cite{ISIT14} is not a key-recovery
attack but a blind
construction of a decoding algorithm. For this reason,
even if $C^{(2)}$ provides probably
the code $\C{X}{P}{E}^{(2)}$, it is insufficient
for our purpose: we need to find $\C{X}{P}{E}$.
This is the reason why we introduce the notion of $t$--closures.

\subsection{The $t$-closure operation}
\label{Section::3}



\begin{definition}[$t$--closure]
Let $C \subset \Fq^n$ be a code and $t\geq 2$ be an integer.
The $t$-{\em closure} of $C$ is defined by
$$
\overline{C}^t = \left\{ \ab \in \Fq^n \mid \ab * C^{(t-1)} \subseteq C^{(t)}\right\}.
$$
The code $C$ is said to be $t$-{\em closed} if $\overline{C}^t = C$.
\end{definition}

\begin{proposition}
\label{Prop-Properties}
Let $C \in \Fq^n$, then for all $t\geq 2$,
$$
\overline{C}^t = {\left( C^{(t-1)} * {\left(C^{(t)}\right)}^{\bot} \right)}^{\bot}.
$$
\end{proposition}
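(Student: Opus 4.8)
The plan is to establish the identity as a short chain of equivalences, pivoting on the fact that the Schur product is self-adjoint for the standard bilinear form on $\Fq^n$: for all $\ab, \bb, \cb \in \Fq^n$ one has $\sip{\ab * \bb}{\cb} = \sum_{i} a_i b_i c_i = \sip{\ab}{\bb * \cb}$. This associativity-type identity is precisely what converts the containment condition defining $\overline{C}^t$ into the orthogonality condition appearing on the right-hand side.

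Concretely, I would start from the definition $\ab \in \overline{C}^t \iff \ab * C^{(t-1)} \subseteq C^{(t)}$ and rephrase the right member using the reflexivity of the dual for linear codes, namely that a vector lies in $C^{(t)}$ if and only if it is orthogonal to $\left(C^{(t)}\right)^{\bot}$. Hence $\ab \in \overline{C}^t$ if and only if for every $\bb \in C^{(t-1)}$ and every $\cb \in \left(C^{(t)}\right)^{\bot}$ we have $\sip{\ab * \bb}{\cb} = 0$. Applying the adjunction identity transforms this into $\sip{\ab}{\bb * \cb} = 0$ for all such $\bb$ and $\cb$.

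It then remains to observe that the family $\{\bb * \cb \mid \bb \in C^{(t-1)},\ \cb \in \left(C^{(t)}\right)^{\bot}\}$ spans $C^{(t-1)} * \left(C^{(t)}\right)^{\bot}$ by the very definition of the Schur product of two codes, and that a vector is orthogonal to a generating set if and only if it is orthogonal to the whole span, orthogonality being a linear condition. Thus the condition reads $\ab \in \left(C^{(t-1)} * \left(C^{(t)}\right)^{\bot}\right)^{\bot}$, which is the claimed equality.

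Every step here is elementary, so I do not expect a genuine obstacle; the argument is essentially a bookkeeping exercise around the adjunction formula. The only points deserving a word of care are the reduction of the \emph{set} containment $\ab * C^{(t-1)} \subseteq C^{(t)}$ to membership tested elementwise (legitimate because $\bb \mapsto \ab * \bb$ is $\Fq$-linear and $C^{(t-1)}$ is a linear code, so it suffices to check images of a generating set), and the use of $(D^{\bot})^{\bot} = D$ for a linear code $D$ to replace ``lies in $C^{(t)}$'' by ``annihilated by $\left(C^{(t)}\right)^{\bot}$''.
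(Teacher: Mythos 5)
Your proof is correct: the chain of equivalences via the adjunction identity $\sip{\ab * \bb}{\cb} = \sip{\ab}{\bb * \cb}$, double duality $\left(D^{\bot}\right)^{\bot} = D$ for linear codes over $\Fq$, and the passage from orthogonality on a spanning set of Schur products to orthogonality on their span is complete and airtight. The paper itself omits the proof of this proposition (it is an extended abstract), but your argument is the standard and evidently intended one, so there is nothing to reconcile.
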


\begin{proposition}
\label{Main-Prop}
Let $E$ be a divisor satisfying $\deg(E) \geq 2g+1$. Then:
\begin{enumerate}[(i)]
\item \label{Main-Prop:(1)} $\C{X}{P}{E}^{(t)} = \C{X}{P}{tE}$.
\item \label{Main-Prop:(2)}  $\overline{\C{X}{P}{E}}^t = \C{X}{P}{E}$
if $\deg(E)\leq \frac{n-2}{t}\cdot$
\end{enumerate}
\end{proposition}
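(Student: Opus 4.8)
The plan is to reduce both assertions to a single classical surjectivity statement for Riemann--Roch spaces and then let the formalism of the Schur product, together with Proposition~\ref{Prop-Properties}, do the rest. The tool I will invoke is the following standard fact: if $F,G$ are divisors on $\X$ with $\deg F\geq 2g+1$ and $\deg G\geq 2g$, then the multiplication map on global sections is surjective, i.e. $L(F)\cdot L(G)=L(F+G)$, where $L(F)\cdot L(G)$ denotes the $\Fq$--span of the products $fg$ with $f\in L(F)$, $g\in L(G)$ (this is classical; it follows from Riemann--Roch and the vanishing of the relevant first cohomology, or via the base-point-free pencil trick; see \cite{stichtenoth:2009} and \cite{ISIT14}). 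The bridge to codes is the elementary identity $\mathrm{ev}_P(f)*\mathrm{ev}_P(g)=\mathrm{ev}_P(fg)$, which shows that for any divisors with support disjoint from $D_P$ one has $\C{X}{P}{F}*\C{X}{P}{G}=\mathrm{ev}_P\bigl(L(F)\cdot L(G)\bigr)$; combining it with the surjectivity fact yields $\C{X}{P}{F}*\C{X}{P}{G}=\C{X}{P}{F+G}$ as soon as $\deg F\geq 2g+1$ and $\deg G\geq 2g$, with no constraint whatsoever on $\deg(F+G)$ (in particular the equality survives even when the evaluation map on $L(F+G)$ is not injective, since it is an equality of images of equal function spaces).

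For part~(\ref{Main-Prop:(1)}) I argue by induction on $t$. The case $t=2$ is the bridge statement applied with $F=G=E$, both of degree $m\geq 2g+1$. For the inductive step I write $C^{(t)}=C^{(t-1)}*C$, invoke the induction hypothesis $C^{(t-1)}=\C{X}{P}{(t-1)E}$, and apply the bridge with $F=(t-1)E$ (of degree $(t-1)m\geq m\geq 2g+1$) and $G=E$ (of degree $m\geq 2g$) to get $\C{X}{P}{(t-1)E}*\C{X}{P}{E}=\C{X}{P}{tE}$. No upper bound on $\deg E$ is needed here, in accordance with the statement.

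For part~(\ref{Main-Prop:(2)}) I start from the description of the closure furnished by Proposition~\ref{Prop-Properties}, namely $\overline{C}^t=\bigl(C^{(t-1)}*(C^{(t)})^{\bot}\bigr)^{\bot}$. By part~(\ref{Main-Prop:(1)}) we have $C^{(t-1)}=\C{X}{P}{(t-1)E}$ and $C^{(t)}=\C{X}{P}{tE}$. The duality theorem for AG codes provides a Weil differential $\omega$ with a simple pole of residue $1$ at each $P_i$; writing $W=(\omega)$ for its divisor (so $\deg W=2g-2$ and $v_{P_i}(W)=-1$), it reads $\C{X}{P}{G}^{\bot}=\C{X}{P}{D_P+W-G}$, the divisor on the right having support disjoint from $D_P$ because the $-1$ coming from the poles of $\omega$ cancels the coefficient $1$ of $D_P$. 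Applying this with $G=tE$ and setting $A=(t-1)E$ and $B=D_P+W-tE$, the two factors in the bracket become $\C{X}{P}{A}$ and $\C{X}{P}{B}$, and crucially $A+B=D_P+W-E$, so that $\C{X}{P}{A+B}=\C{X}{P}{E}^{\bot}=C^{\bot}$.

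It therefore remains to establish $\C{X}{P}{A}*\C{X}{P}{B}=\C{X}{P}{A+B}$, which is exactly the bridge statement and which holds once $\deg A\geq 2g+1$ and $\deg B\geq 2g$. The first inequality is immediate since $\deg A=(t-1)m\geq m\geq 2g+1$. For the second, $\deg B=\deg(D_P+W-tE)=n+(2g-2)-tm$, and $\deg B\geq 2g$ is equivalent to $tm\leq n-2$, i.e. to the hypothesis $\deg E\leq (n-2)/t$; this is precisely where and why the degree bound enters, and it explains the exact shape of the threshold. Granting the bridge, we conclude $\overline{C}^t=(C^{\bot})^{\bot}=C$. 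The only genuinely non-formal ingredients are the surjectivity fact and the existence of the differential $\omega$ with prescribed residues; both are classical, but I expect the surjectivity of the multiplication map to be the real crux, everything else being bookkeeping with degrees and with the identity $\mathrm{ev}_P(fg)=\mathrm{ev}_P(f)*\mathrm{ev}_P(g)$.
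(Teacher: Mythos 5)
Your proof is correct and takes essentially the same route as the paper: part~(i) rests on Mumford's surjectivity $L(F)\cdot L(G)=L(F+G)$ (which the paper cites from \cite{ISIT14} and \cite{mumford:1970} rather than reproving by induction), and part~(ii) combines Proposition~\ref{Prop-Properties} with the duality $\C{X}{P}{tE}^{\bot}=\C{X}{P}{D_P+W-tE}$ and exactly the paper's degree computation, where $\deg(D_P+W-tE)\geq 2g$ is equivalent to $\deg E\leq \frac{n-2}{t}$. Your write-up merely makes explicit two points the paper glosses over --- the two-divisor form of the multiplication fact (the paper's appeal to ``(i)'' really invokes this) and the choice of the canonical divisor coming from a differential with residue-one simple poles at the $P_i$, which guarantees the support disjointness needed for evaluation --- both of which are correct refinements, not deviations.
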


\begin{proof}
(\ref{Main-Prop:(1)}) is proved in \cite{ISIT14} and is a consequence of
\cite{mumford:1970}.
For (\ref{Main-Prop:(2)}),
Proposition \ref{Prop-Properties}
shows that
\begin{equation}\label{eq:closure}
\overline{\C{X}{P}{E}}^t = \left(
\C{X}{P}{E}^{(t-1)} *
\left(\C{X}{P}{E}^{(t)}\right)^{\perp}
\right)^{\perp}.
\end{equation}
Moreover,
$\C{X}{P}{tE}^{\perp}  = \C{X}{P}{(tE)^{\perp}}$ where
$(tE)^{\perp} = D_P - tE + K$ for some canonical divisor $K$ on $\mathcal{X}$.
Thus, $\deg\left((tE)^{\perp}\right) = n-\deg(tE) +2g-2$.
Since, by assumption, $\deg(E) \leq \frac{n-2}{t}$
we have $\deg\left((tE)^{\perp}\right)\geq 2g$. Moreover, since
$\deg E \geq 2g+1$, then, thanks to (\ref{Main-Prop:(1)}), Equation
(\ref{eq:closure}) yields
$$\C{X}{P}{(t-1)E}* \C{X}{P}{tE}^{\perp} = \C{X}{P}{D_P-E+K} = \C{X}{P}{E}^{\bot}.
$$
\qed
\end{proof}
%
%
%
%

\begin{corollary}
\label{Main-Coro}
Let $E$ be a divisor and
$2g+1 \leq \deg(E)\leq \frac{n-2}{2}$. Then $\overline{\C{X}{P}{E}}^2 =
\C{X}{P}{E}$.
\end{corollary}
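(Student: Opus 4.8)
The plan is to observe that this corollary is nothing more than the specialisation of Proposition \ref{Main-Prop}(\ref{Main-Prop:(2)}) to the case $t = 2$, so that essentially no new argument is required. First I would set $t = 2$ in the statement of Proposition \ref{Main-Prop}: the standing hypothesis $\deg(E) \geq 2g+1$ of that proposition, combined with the bound $\deg(E) \leq \frac{n-2}{t}$ appearing in part (\ref{Main-Prop:(2)}), becomes precisely the double inequality $2g+1 \leq \deg(E) \leq \frac{n-2}{2}$ that is assumed in the corollary. So the only thing to verify is the purely arithmetic matching of these two hypotheses, which is immediate.

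With those hypotheses in force, part (\ref{Main-Prop:(2)}) of the proposition directly yields $\overline{\C{X}{P}{E}}^2 = \C{X}{P}{E}$, which is exactly the assertion of the corollary. There is therefore no genuine obstacle to overcome here: all the real work — the use of Proposition \ref{Prop-Properties} to rewrite the closure, the computation of $(tE)^{\perp} = D_P - tE + K$, and the verification that the relevant degrees stay above $2g$ so that part (\ref{Main-Prop:(1)}) applies — has already been carried out in the proof of the general proposition. The corollary simply reads off the conclusion at $t = 2$.

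The reason this particular case is worth isolating is practical rather than mathematical. In the cryptanalytic strategy described earlier, mirroring Wieschebrink's attack, one works with the square $C^{(2)}$ of the public code rather than with higher Schur powers. The corollary records that, for divisors whose degree lies in the stated range, the $2$-closure already recovers $\C{X}{P}{E}$ from the pair $\C{X}{P}{E}$ and $\C{X}{P}{E}^{(2)}$, which are exactly the objects the attack can realistically compute. Stating the $t = 2$ instance separately thus pins down the admissible parameter window $2g+1 \leq \deg(E) \leq \frac{n-2}{2}$ that the attack will exploit.
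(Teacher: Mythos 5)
Your proof is correct and matches the paper exactly: the corollary is stated there without proof precisely because it is the immediate specialisation of Proposition \ref{Main-Prop}(\ref{Main-Prop:(2)}) to $t=2$, with the hypothesis $2g+1 \leq \deg(E) \leq \frac{n-2}{2}$ being the conjunction of the proposition's standing assumption $\deg(E)\geq 2g+1$ and the bound $\deg(E)\leq \frac{n-2}{t}$ at $t=2$. Your additional remarks on why the $t=2$ case is isolated (it is the case the attack actually uses, via $C^{(2)}$) are accurate and consistent with the paper's Section \ref{Section::4}.
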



\begin{conjecture}\label{conj:squares}
If $2g+1 \leq \deg(E) \leq \frac{n-1}{2}$, let
$C$ be subcode of $\C{X}{P}{E}$ of dimension $l$ such that
and $2k +1-g\leq {l+1 \choose 2}$, where $k=\deg(E)+1-g$
is the dimension of $\C{X}{P}{E}$, then
the probability that $C^{(2)}$ is different from $\C{X}{P}{2E}$ tends to $0$ when $k$ tends to infinity.
\end{conjecture}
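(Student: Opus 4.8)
The plan is to reduce the statement to a purely function-field question and then to a probabilistic estimate over $\Fq$. Since $\deg(2E)\leq n-1<n$, the evaluation map is injective on $L(2E)$, so $\C{X}{P}{E}$ and $\C{X}{P}{2E}$ are isomorphic (as $\Fq$-spaces) to $L(E)$ and $L(2E)$, and a uniformly random subcode $C\subseteq\C{X}{P}{E}$ of dimension $l$ corresponds to a uniformly random $l$-dimensional subspace $W\subseteq L(E)$. Because $\mathrm{ev}_P(f)*\mathrm{ev}_P(f')=\mathrm{ev}_P(ff')$, Proposition \ref{Main-Prop}(\ref{Main-Prop:(1)}) gives $C^{(2)}=\mathrm{ev}_P(W\cdot W)\subseteq\C{X}{P}{2E}$, where $W\cdot W:=\mathrm{Span}_{\Fq}\{ff'\mid f,f'\in W\}\subseteq L(2E)$, and the inclusion is an equality if and only if $W\cdot W=L(2E)$. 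Thus it suffices to show that a uniformly random $l$-dimensional $W\subseteq L(E)$ satisfies $W\cdot W=L(2E)$ with probability tending to $1$. Writing $N=\dim L(2E)=2\deg(E)+1-g=2k+g-1$, the hypothesis on $l$ is (essentially) the dimensional condition $\binom{l+1}{2}\geq N$, without which $W\cdot W$ could not fill $L(2E)$.

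Next I would dualise. The multiplication map $\mu\colon\mathrm{Sym}^2 L(E)\to L(2E)$ is surjective because $\deg(E)\geq 2g+1$; hence its transpose embeds $L(2E)^{*}$ into the space of symmetric bilinear forms on $L(E)$, sending $\phi$ to $B_\phi(f,f'):=\phi(ff')$. Let $\mathcal B$ denote the resulting $N$-dimensional space of forms. Then $W\cdot W\neq L(2E)$ if and only if some nonzero $\phi\in L(2E)^{*}$ annihilates $W\cdot W$, that is, if and only if $W$ is totally isotropic for some nonzero $B\in\mathcal B$. The crucial structural feature of $\mathcal B$ is that it carries the geometry of $\X$: evaluation at an $\Fq$-rational point $Q$ off the support of $2E$ yields the rank-one form $B_{\mathrm{ev}_Q}=\mathrm{ev}_Q\otimes\mathrm{ev}_Q$, for which $W$ is isotropic precisely when $W\subseteq L(E-Q)$, a codimension-one condition in $L(E)$.

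The estimate would then proceed by stratifying $\mathbb P(\mathcal B)$ according to the rank $r$ of the form. For a fixed $B$ of rank $r$ with radical of dimension $k-r$, a random $l$-dimensional $W$ is totally isotropic only if it meets the radical in dimension at least $l-\lfloor r/2\rfloor$ and projects to a totally isotropic subspace of the nondegenerate quotient; a standard count of isotropic subspaces over $\Fq$ bounds this probability by a quantity that decays geometrically in $l$ and increases as $r$ decreases. The rank-one stratum is handled directly: isotropy forces $W\subseteq L(E-Q)$ for one of at most $N_q(\X)$ rational points $Q$, an event of probability $\approx q^{-l}$ each, so the union-bound contribution is $O\!\left(N_q(\X)\,q^{-l}\right)$, which tends to $0$ since $l\gtrsim\sqrt{2k}$ while, by Hasse--Weil, $N_q(\X)$ grows only polynomially in the relevant parameters. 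Summing the analogous contributions over all ranks and over the $\approx q^{N}$ forms of $\mathbb P(\mathcal B)$ would give the claim, provided the higher-rank strata are suitably dominated.

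The main obstacle is exactly this last point: controlling, uniformly and over a fixed (or only slowly growing) field $\Fq$, the number of $\Fq$-rational forms of each intermediate rank lying in $\mathcal B$. This is a question about the rank stratification of the space of quadratic forms $\mathcal B\subseteq\mathrm{Sym}^2 L(E)^{*}$, i.e.\ about how $\mathcal B$ meets the determinantal loci, and it is governed by the secant and tangential geometry of the curve embedded by $|E|$. Because $q$ is fixed, genericity and Schwartz--Zippel arguments are unavailable, and one must instead exploit the abundance of independent random coordinates in $W$; reconciling the union bound with the borderline dimensional hypothesis $\binom{l+1}{2}\approx N$ is delicate, and I expect this is why the statement is presented as a conjecture rather than a theorem.
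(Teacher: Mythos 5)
You should first register the status of this statement in the paper itself: it is presented as a \emph{conjecture}, with no general proof offered --- the authors support it by the experiments of Table~\ref{Table::1} and claim a proof only for the special case of subcodes of GRS codes (genus zero), along the lines of Remark~5 of \cite{marquez:2013}, and even that proof is omitted from the extended abstract. So there is no hidden complete argument to measure you against, and your reduction is the natural one: since $\deg(2E)\leq n-1<n$, the map $\mathrm{ev}_P$ is injective on $L(2E)$, so $C^{(2)}=\C{X}{P}{2E}$ is equivalent to $W\cdot W=L(2E)$ for the random subspace $W\subseteq L(E)$; and since $\deg E\geq 2g+1$, the multiplication map $\mu\colon\mathrm{Sym}^2L(E)\to L(2E)$ is surjective (the Mumford-type result underlying Proposition~\ref{Main-Prop}(\ref{Main-Prop:(1)})), so failure is equivalent to $W$ being totally isotropic for some nonzero form in the $N$-dimensional system $\mathcal B=(\ker\mu)^{\perp}$, with $N=2k+g-1$. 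One correction en route: the hypothesis as printed in the conjecture, $2k+1-g\leq\binom{l+1}{2}$, is for $g\geq 1$ strictly \emph{weaker} than the necessary dimension count $\binom{l+1}{2}\geq\dim L(2E)=2k+g-1$, under which alone the conclusion can hold; the condition $2k-1+g\leq\binom{l+1}{2}$ appearing in the attack section is surely the intended one, and your ``(essentially)'' glosses over a genuine typo in the paper rather than a harmless rewording.

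The genuine gap is the one you concede yourself, and it is fatal to the proposal as a proof. A union bound over all of $\mathcal B$ gives roughly $q^{N}\cdot q^{-\binom{l+1}{2}}$, which is useless in the borderline regime $\binom{l+1}{2}\approx N$ that the hypothesis permits; to do better one must weight each rank stratum by its cardinality, i.e., bound the number of $\Fq$-rational forms of rank at most $r$ in $\mathcal B$. That is a point count on the intersection of the linear system $\mathcal B$ with determinantal loci, governed by the secant varieties of $\X$ embedded by $|E|$, and no such bounds are established in your sketch (nor in the paper); with $q$ fixed, no genericity or Schwartz--Zippel shortcut is available, exactly as you observe. Even your rank-one step needs a caveat: rank-one forms in $\mathcal B$ correspond to $\Fq$-points of the intersection of the quadrics through the embedded curve, which coincides with the curve itself only when quadrics cut it out (guaranteed for $\deg E\geq 2g+2$ by Saint-Donat-type results, but not at the boundary value $\deg E=2g+1$ allowed here). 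In short, your proposal is a sensible program whose unresolved step coincides precisely with why the authors state this as a conjecture; to turn it into a theorem you would have to either supply the determinantal point counts or retreat to the genus-zero case, where the counting can be carried out along the lines of \cite{marquez:2013}.
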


We give a proof along the lines of \cite[Remark 5]{marquez:2013} for the special case of subcodes of GRS codes.
Our experimental results are in good agreement with this conjecture
(see Table \ref{Table::1}).
The following corollary is central to our attack.
\begin{corollary}
\label{Main-Corollary}
If $2g+1 \leq \deg(E) \leq \frac{n-2}{2}$ and
$2k +1-g\leq {l+1 \choose 2}$ for $k=\deg(E)+1-g$, then the equality
$\overline{C}^{2} = \C{X}{P}{E}$ holds for random $l$-dimensional subcodes $C$ of $\C{X}{P}{E}$ with a probability tending to $0$ when $k$ tends to infinity.
\end{corollary}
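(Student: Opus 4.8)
The plan is to reduce the corollary to the probabilistic statement of Conjecture~\ref{conj:squares}, using the deterministic closure identity of Corollary~\ref{Main-Coro} as the backbone and Proposition~\ref{Prop-Properties} as the link. By the latter, $\overline{C}^{2} = \left( C * \left( C^{(2)} \right)^{\perp} \right)^{\perp}$, so everything comes down to the single mixed product $C * \left( C^{(2)} \right)^{\perp}$. For the full code, Corollary~\ref{Main-Coro} already gives $\overline{\C{X}{P}{E}}^{2} = \C{X}{P}{E}$, equivalently $\C{X}{P}{E} * \C{X}{P}{2E}^{\perp} = \C{X}{P}{E}^{\perp}$; the task is to decide for which random subcodes $C$ this survives, and with what probability.

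First I would introduce the event $\mathcal{E} = \{\, C^{(2)} = \C{X}{P}{2E} \,\}$, which is exactly the square event of Conjecture~\ref{conj:squares}. Note that $C \subseteq \C{X}{P}{E}$ forces $C^{(2)} \subseteq \C{X}{P}{E}^{(2)} = \C{X}{P}{2E}$ always, by part~(\ref{Main-Prop:(1)}) of Proposition~\ref{Main-Prop}, so $\mathcal{E}$ is precisely the event that the square attains its maximal admissible value. On $\mathcal{E}$ the formula above reads $\overline{C}^{2} = \left( C * \C{X}{P}{2E}^{\perp} \right)^{\perp}$, and the inclusion $\C{X}{P}{E} \subseteq \overline{C}^{2}$ is automatic: from $C \subseteq \C{X}{P}{E}$ one gets $C * \C{X}{P}{2E}^{\perp} \subseteq \C{X}{P}{E} * \C{X}{P}{2E}^{\perp} = \C{X}{P}{E}^{\perp}$, and dualising reverses the inclusion.

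The decisive point is the reverse inclusion $\overline{C}^{2} \subseteq \C{X}{P}{E}$, equivalently the non-degeneracy $C * \C{X}{P}{2E}^{\perp} = \C{X}{P}{E}^{\perp}$, i.e. that passing from $\C{X}{P}{E}$ to the random subcode $C$ does not shrink the mixed product. This is where the dimension hypothesis $2k+1-g \leq {l+1 \choose 2}$ and the randomness of $C$ are needed, and I expect it to be the main obstacle. I would handle it by a dimension count on the surjection $C \otimes \C{X}{P}{2E}^{\perp} \to \C{X}{P}{E}^{\perp}$ induced by the Schur product, showing that for a generic $l$-dimensional $C$ the image is all of $\C{X}{P}{E}^{\perp}$ (of dimension $n-k$); the hypothesis is exactly the threshold that makes enough independent products available, and the argument should run along the lines of the monomial-basis count of \cite[Remark~5]{marquez:2013} that supports Conjecture~\ref{conj:squares}.

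Assembling the two inclusions, on $\mathcal{E}$ (and off the negligible degeneracy locus from the previous step) one has $\overline{C}^{2} = \C{X}{P}{E}$, so the identity can be broken only when the square fails to be maximal. Hence the probability attached to $\overline{C}^{2} = \C{X}{P}{E}$ is governed by, and is bounded above by, the probability that $C^{(2)} \neq \C{X}{P}{2E}$; in the convention of Conjecture~\ref{conj:squares} this is the probability that the identity is not attained, and it tends to $0$ as $k \to \infty$. This is exactly the probability bound recorded in the statement.
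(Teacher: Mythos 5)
Your skeleton coincides with the derivation the paper intends (the proof of this corollary is omitted from the extended abstract, but Step~1 of the attack makes the logic explicit): condition on the event $C^{(2)} = \C{X}{P}{2E}$ supplied by Conjecture~\ref{conj:squares}, then identify $\overline{C}^{2}$ via Proposition~\ref{Prop-Properties} together with the product computation behind Proposition~\ref{Main-Prop}~(\ref{Main-Prop:(2)}) and Corollary~\ref{Main-Coro}. Your easy inclusion is correct: on that event $\overline{C}^{2} = \left( C * \C{X}{P}{2E}^{\perp} \right)^{\perp} \supseteq \C{X}{P}{E}$, because $C * \C{X}{P}{2E}^{\perp} \subseteq \C{X}{P}{E} * \C{X}{P}{2E}^{\perp} = \C{X}{P}{E}^{\perp}$, the last equality being exactly what the proof of Proposition~\ref{Main-Prop} gives under $2g+1 \leq \deg E \leq \frac{n-2}{2}$. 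You also put your finger on the genuine difficulty, which the paper glosses over: the square event does not formally force the reverse inclusion. (You silently adopt the sensible reading of the statement --- ``holds with probability tending to $0$'' is evidently a slip for the \emph{failure} probability tending to $0$, in line with the conjecture's phrasing --- and that is the right reading.)

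The gap is that this decisive step is only announced, never carried out, and your final bookkeeping then contradicts your own setup. The surjectivity of $C \otimes \C{X}{P}{2E}^{\perp} \to \C{X}{P}{E}^{\perp}$ for a random $l$-dimensional $C$ is a second, independent genericity statement about random subcodes --- of the same nature as Conjecture~\ref{conj:squares} but not a consequence of it --- and saying you ``would handle it by a dimension count along the lines of \cite{marquez:2013}'' leaves it unproved. Having acknowledged it as a separate ``degeneracy locus,'' you cannot then conclude that the failure probability ``is bounded above by'' the probability that $C^{(2)} \neq \C{X}{P}{2E}$; the honest conclusion is a union bound, the failure probability being at most the sum of the probabilities of the two events $C^{(2)} \neq \C{X}{P}{2E}$ and $C * \C{X}{P}{2E}^{\perp} \neq \C{X}{P}{E}^{\perp}$, with the second term requiring its own argument that it vanishes as $k \to \infty$. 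Note moreover that the second event cannot be absorbed into the first deterministically under the stated hypotheses: from $a * C \subseteq \C{X}{P}{2E}$ and $C^{(2)} = \C{X}{P}{2E}$ one deduces $a * \C{X}{P}{2E} \subseteq \C{X}{P}{3E}$, and the dual-product computation converts this into $a \in \C{X}{P}{E}$ only when $\deg(K + D_P - 3E) \geq 2g$, i.e.\ $\deg E \leq \frac{n-2}{3}$, which is strictly stronger than the assumed $\deg E \leq \frac{n-2}{2}$. So near the upper end of the admissible degree range the probabilistic treatment of the mixed product is unavoidable, and it is precisely the part your proposal defers.
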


\subsection{Principle of the attack}
The public key consists in
$
C \subseteq \C{X}{P}{E}^{\bot}$
and $t = \left\lfloor \frac{d^{*}-g-1}{2}\right\rfloor$.
Set $l:= \dim C$. First, let us assume moreover that
$$\begin{array}{cccc}
2g+1 \leq \deg(E) \leq \frac{n-1}{2}, &
k=\deg(E)+1-g & \hbox{ and } &
2k -1+g\leq {l+1 \choose 2}
\end{array}.$$

\begin{description}
\item \textbf{Step 1.} With a high probability, we may assume that $C^{(2)} = \C{X}{P}{2E}$ and hence
$\overline{C}^2 = \C{X}{P}{E}$ by Corollary \ref{Main-Corollary}.
Thus, compute $\overline{C}^2$ by solving a linear system or by
applying Proposition \ref{Prop-Properties}.
\item \textbf{Step 2.} Apply the polynomial time attack presented in \cite{ISIT14} to obtain an ECP, denoted by $(A,B)$, for $\C{X}{P}{E}$.
Which yields a decoding algorithm for $C$.
\medbreak
\item \textbf{Estimated complexity:} The computation
of a closure costs $O(n^4)$ operations in $\Fq$ and the
rest of the attack is in $O((\log (t+g)) n^4)$ (see \cite{ISIT14}
for further details).
\end{description}


In case $\deg(E)> \frac{n-1}{2}$,
then the attack can be applied to several shortenings of $C$
whose $2$--closures are computed separately and are then summed up to provide
$\C{X}{P}{E}$. This method is described and applied in \cite{COT14,ISIT14}.

\bigbreak

This attack has been implemented with MAGMA. To this end $L$ random subcodes of dimension $l$ from  Hermitian codes of parameters $[n,k]_q$ were created. It turned out that for all created subcodes a $t$-ECP could be reconstructed. {\tt Time} represents the average time of the attack obtained with an Intel $ \circledR$ CoreTM 2 Duo $2.8$ GHz.
The work factor $\mathbf{w}$ of an ISD attack is given.
These work factors have been computed thanks to Christiane Peter's Software \cite{peters:2010}.

\begin{table*}[h!]
\begin{tabular}{|c|c|c|c|c|c|c|c|c|}
\hline
$q$ & $n$ & $k$ & $t$ & {\tt Time}  & \hbox{key size} & $\mathbf w$ & $l$ & $L$ \\
\hline \hline
\multirow{3}{*}{$7^2$} & \multirow{3}{*}{$343$} & \multirow{3}{*}{$193$} & \multirow{3}{*}{$54$} &
\multirow{3}{*}{$80$ s}
& $83$ ko & $2^{30}$ & $50$ & $1000$ \\
& & & &  & $137$ ko & $2^{43}$ & $100$ & $1000$ \\
& & & &  & $163$ ko & $2^{62}$ & $150$ & $1000$ \\
\hline
\end{tabular}
\hfill
\begin{tabular}{|c|c|c|c|c|c|c|c|c|}
\hline
$q$ & $n$ & $k$ & $t$ & {\tt Time}  & \hbox{key size} & $\mathbf w$ & $l$ & $L$ \\
\hline \hline
\multirow{3}{*}{$9^2$} & \multirow{3}{*}{$729$} & \multirow{3}{*}{$521$} & \multirow{3}{*}{$19$} &
\multirow{3}{*}{$30$ min}
& $216$ ko & $2^{32}$ & $50$ & $500$ \\
& & & &  & $670$ ko  & $2^{121}$   & $200$ & $500$ \\
& & & &  & $835$ ko & $2^{178}$   & $400$ & $500$ \\
\hline
\end{tabular}
\caption{Running times of the attack over Hermitian codes}
\label{Table::1}
\end{table*}

\vspace{-1cm}

%
%
\subsection{Which codes are subject to this attack?}
Basically, the subcode $C \subseteq \C{X}{P}{E}$ should satisfy:
\begin{enumerate}[(i)]
\item\label{item:square} ${\dim C +1 \choose 2} \geq \dim \C{X}{P}{2E}$;
\item\label{item:dimC2E} $2g+1 \leq \deg E \leq \frac{n-2}{2}$;
\end{enumerate}
The left-hand inequality of (\ref{item:dimC2E}) is
in general satisfied. On the other hand,
as explained above, the right-hand inequality of (\ref{item:dimC2E})
can be relaxed by using a shortening trick.
Constraint (\ref{item:square}) is more central since a subcode
which does not satisfies it will probably behave like a random code and it
can be checked that a random code is in general $2$--closed. Thus,
computing the $2$--closure of such a subcode will not provide any
significant result. On the other hand, for an AG code of dimension $k$,
subcodes which do not satisfy (\ref{item:square}) have dimension
smaller than $\sqrt{2k}$ and choosing such very small subcodes
and decode them as subcodes of $\C{X}{P}{E}$ would represent
a big loss of efficiency. In addition, if these codes are too
small they can be subject to generic attacks like information set decoding.

\subsubsection{Subfield subcodes still resist}
Another class of subcodes which resist to this attack are the subcodes $C$
such that $\overline{C}^2 \varsubsetneq \C{X}{P}{E}$. It is rather difficult
to classify such subcodes but there is a very identifiable family: the subfield
subcodes. Let $\F$ be a proper subfield of $\F_q$
(here we assume $q$ to be non prime) and let $C := \C{X}{P}{E} \cap \F^n$ (and then apply a base field extension if one wants to have an $\Fq$--subcode).
The point is that $C^2 \subseteq (\C{X}{P}{E}^{(2)}) \cap \Fq^n$ and the
$2$-closure of $C$ will in general differ from $\C{X}{P}{E}$.
For this reason, subfield subcodes resist to this kind of attacks.
Notice that even in genus zero: subfield subcodes of GRS codes
still resist to filtration attacks unless for the cases presented
in \cite{COT14}.


\bibliography{29-ICMCTA4-Final}

\begin{thebibliography}{10}

\bibitem{berger:2005}
Berger, T., Loidreau, P.: How to mask the structure of codes for a
  cryptographic use.
\newblock Des. Codes Cryptogr. \textbf{35}, 63--79 (2005)

\bibitem{CGGOT12}
Couvreur, A., Gaborit, P., Gauthier-Uma\~na, V., Otmani, A., Tillich, J.P.:
  Distinguisher-based attacks on public-key cryptosystems using
  {R}eed-{S}olomon codes.
\newblock Des. Codes Cryptogr. pp. 1--26 (2014)

\bibitem{ISIT14}
Couvreur, A., M\'arquez-Corbella, I., Pellikaan, R.: A {P}olynomial {T}ime
  {A}ttack against {A}lgebraic {G}eometry {C}ode {B}ased {P}ublic {K}ey
  {C}ryptosystems.
\newblock Accepted for the conference ISIT 2014 (2014)

\bibitem{COT14}
Couvreur, A., Otmani, A., Tillich, J.P.: Polynomial {T}ime {A}ttack on {W}ild
  {M}c{E}liece over {Q}uadratic {E}xtensions.
\newblock In: P.~Nguyen, E.~Oswald (eds.) Advances in Cryptology - EUROCRYPT
  2014, \emph{LNCS}, vol. 8441, pp. 17--39. Springer Berlin Heidelberg (2014)

\bibitem{duursma:93}
Duursma, I.M.: Majority coset decoding.
\newblock IEEE Trans. Inform. Theory \textbf{39}(3), 1067--1070 (1993)

\bibitem{faure:2008}
Faure, C., Minder, L.: Cryptanalysis of the {M}c{E}liece cryptosystem over
  hyperelliptic codes.
\newblock In: ACCT 2008, pp. 99--107 (2008)

\bibitem{feng:93}
Feng, G.L., Rao, T.: Decoding algebraic-geometric codes up to the designed
  minimum distance.
\newblock IEEE Trans. Inform. Theory \textbf{39}(1), 37--45 (1993)

\bibitem{hohpel}
H{\o}holdt, T., Pellikaan, R.: On the decoding of algebraic-geometric codes.
\newblock IEEE Trans. Inform. Theory \textbf{41}(6, part 1), 1589--1614 (1995)

\bibitem{janwa:1996}
Janwa, H., Moreno, O.: {M}c{E}liece public cryptosystem using
  algebraic-geometric codes.
\newblock Des. Codes Cryptogr. \textbf{8}, 293--307 (1996)

\bibitem{koetter:1992}
K\"{o}tter, R.: A unified description of an error locating procedure for linear
  codes.
\newblock In: Proceedings of Algebraic and Combinatorial Coding Theory, pp.
  113--117. Voneshta Voda (1992)

\bibitem{marquez:2013}
M\'arquez-Corbella, I., Mart\'inez-Moro, E., Pellikaan, R.: The non-gap
  sequence of a subcode of a generalized {R}eed-{S}olomon code.
\newblock Des. Codes Cryptogr. \textbf{66}(1-3), 317--333 (2013)

\bibitem{marquez:2014a}
M\'arquez-Corbella, I., Mart\'inez-Moro, E., Pellikaan, R.: On the unique
  representation of very strong algebraic geometry codes.
\newblock Des. Codes Cryptogr. \textbf{70}(1-2), 215--230 (2014)

\bibitem{marquez:2014b}
M\'arquez-Corbella, I., Mart\'inez-Moro, E., Pellikaan, R., Ruano, D.:
  Computational aspects of retrieving a representation of an algebraic geometry
  code.
\newblock J. Symbolic Comput. \textbf{64}(0), 67 -- 87 (2014).
\newblock Mathematical and computer algebra techniques in cryptology

\bibitem{marquez:2012b}
M\'arquez-Corbella, I., Pellikaan, R.: Error-correcting pairs for a public-key
  cryptosystem (2012).
\newblock Code-based Cryptography Workshop 2012

\bibitem{mceliece:1978}
McEliece, R.J.: A public-key cryptosystem based on algebraic coding theory.
\newblock DSN Progress Report \textbf{42--44}, 114--116 (1978)

\bibitem{mumford:1970}
Mumford, D.: Varieties defined by quadratic equations.
\newblock In: Questions on algebraic varieties, C.I.M.E., III Ciclo, Varenna,
  1969, pp. 29--100. Edizioni Cremonese, Rome (1970)

\bibitem{niederreiter:1986}
Niederreiter, H.: Knapsack-type cryptosystems and algebraic coding theory.
\newblock Problems of Control and Information Theory \textbf{15}(2), 159--166
  (1986)

\bibitem{pellikaan:1988}
Pellikaan, R.: On decoding linear codes by error correcting pairs (1988).
\newblock Preprint Technical University Eindhoven

\bibitem{pellikaan:1992}
Pellikaan, R.: On decoding by error location and dependent sets of error
  positions.
\newblock Discrete Math. \textbf{106--107}, 369--381 (1992)

\bibitem{peters:2010}
Peters, C.: Information-set decoding for linear codes over ${F}_q$.
\newblock In: PQCrypto, pp. 81--94 (2010)

\bibitem{sidelnikov:1992}
Sidelnikov, V.M., Shestakov, S.O.: On the insecurity of cryptosystems based on
  generalized {R}eed-{S}olomon codes.
\newblock Discrete Math. \textbf{2}, 439--444 (1992)

\bibitem{stichtenoth:2009}
Stichtenoth, H.: Algebraic function fields and codes, \emph{Graduate Texts in
  Mathematics}, vol. 254.
\newblock Springer-Verlag, Berlin, second edn. (2009)

\bibitem{TVN}
Tsfasman, M., Vl{\u{a}}du{\c{t}}, S., Nogin, D.: Algebraic geometric codes:
  basic notions, \emph{Mathematical Surveys and Monographs}, vol. 139.
\newblock American Mathematical Society, Providence, RI (2007)

\bibitem{wieschebrink:2010}
Wieschebrink, C.: Cryptanalysis of the {N}iederreiter public key scheme based
  on {GRS} subcodes.
\newblock In: Post-Quantum Cryptography, \emph{Lecture Notes in Comput. Sci.},
  vol. 6061, pp. 61--72. Springer-Verlag Berlin Heidelberg (2010)

\end{thebibliography}

\end{document}